\newtheorem{proposition}{Proposition}
\newtheorem{remark}{Remark}
\newcommand{\eqdef}{:=}
\newcommand{\rvec}[1]{\mathbbm{#1}} 		
\newcommand{\rmat}[1]{\mathbbm{#1}} 	
\newcommand{\E}{\mathsf{E}}		
\newcommand{\V}{\mathsf{V}}			
\newcommand{\stdset}[1]{\mathbbmss{#1}}	
\newcommand{\set}[1]{\mathcal{#1}}		
\renewcommand{\vec}[1]{\bm{#1}}		
\newcommand{\CN}{\mathcal{CN}}			
\newcommand{\herm}{\mathsf{H}}			
\author{\vspace{-0.2cm}\IEEEauthorblockN{Lorenzo Miretti and S\l{}awomir Sta\'nczak}\vspace{0.2cm}
\IEEEauthorblockA{\emph{Technische Universität Berlin} and \emph{Fraunhofer Institute for Telecommunications Heinrich-Hertz-Institut}, Berlin, Germany \\
miretti@tu-berlin.de, slawomir.stanczak@hhi.fraunhofer.de}}
\title{Unlocking the Potential of Local CSI in \\ Cell-Free Networks with Channel Aging and Fronthaul Delays 
\thanks{L.~Miretti and S.~Sta\'nczak acknowledge the financial support by the Federal Ministry of Education and Research of Germany in the programme of “Souverän. Digital. Vernetzt.” Joint project 6G-RIC, project identification number: 16KISK020K and 16KISK030.} \thanks{© 2024 IEEE.  Personal use of this material is permitted.  Permission from IEEE must be obtained for all other uses, in any current or future media, including reprinting/republishing this material for advertising or promotional purposes, creating new collective works, for resale or redistribution to servers or lists, or reuse of any copyrighted component of this work in other works.}
}
\begin{document}
\def\baselinestretch{.99}
\setlength{\belowdisplayskip}{1pt}
\setlength{\belowdisplayshortskip}{1pt}
\setlength{\abovedisplayskip}{1pt}
\setlength{\abovedisplayshortskip}{1pt}

\maketitle

\begin{abstract}
It is generally believed that downlink cell-free networks perform best under centralized implementations where the local channel state information (CSI) acquired by the access-points (AP) is forwarded to one or more central processing units (CPU) for the computation of the joint precoders based on global CSI. However, mostly due to limited fronthaul capabilities, this procedure incurs some delay that may lead to partially outdated precoding decisions and hence performance degradation. In some scenarios, this may even lead to worse performance than distributed implementations where the precoders are locally computed by the APs based on partial yet timely local CSI. To address this issue, this study considers the problem of robust precoding design merging the benefits of timely local CSI and delayed global CSI. As main result, we provide a novel distributed precoding design based on the recently proposed \textit{team} minimum mean-square error method. As a byproduct, we also obtain novel insights related to the AP-CPU functional split problem. Our main conclusion, corroborated by simulations, is that the opportunity of performing some local precoding computations at the APs should not be neglected, even in centralized implementations. 
\end{abstract}
\vspace{-0.4cm}

\section{Introduction}
Cell-free massive MIMO is one of the most promising candidate technologies for enhancing the performance of future generation wireless networks \cite{demir2021foundations}. Most of the related current research effort focuses on the development of practical methods and architectures for turning the known theoretical gains of coordinated multi-point concepts into commercially attractive solutions \cite{ngo2017cellfree,bashar2019quantization,interdonato2019ubiquitous,hu2019adc,lozano2020fractional,buzzi2020,gottsch2022subspace}. Of particular relevance is the debate on the type of joint precoding and combining implementation, and on its impact on the functional split problem in cloud radio access network (C-RAN) architectures \cite{kang2015fronthaul}. 

More specifically, taken aside promising yet exotic schemes based, e.g., on sequential processing over serial fronthauls \cite{shaik2021mmse,miretti2021team} or iterative bidirectional over-the-air processing \cite{italo2021ota}, this debate is essentially centered around the comparison between fully distributed and centralized implementations. In fully distributed implementations, the access points (AP) locally compute their precoders and combiners based on local channel state information (CSI) only \cite{ngo2017cellfree}. In contrast, in centralized implementations, these functions are moved to one or more central processing units (CPU) endowed with global CSI \cite{demir2021foundations}. Hence, due to their ability to form joint precoders and combiners based on a broader view of the channel state, centralized implementations are often considered superior, especially in terms of spectral efficiency. However, the theoretical superiority of centralized implementations is typically shown under ideal assumptions such as those related to fronthaul capabilities, which can be challenged in many practical deployments.

\vspace{-0.1cm}Against this background, this paper studies centralized downlink cell-free networks with fronthaul delays, which is a key impairment in real-world settings. In particular, we focus on scenarios where, due to fronthaul limitations and mobility, the delay incurred by the CPUs in collecting global CSI, computing the precoders, and forwarding the result is non-negligible with respect to channel aging \cite{heath2013aging,zheng2021aging,schotten2022delay}. Intuitively, in these scenarios, centralized implementations experience performance degradation and may be even outperformed by fully distributed implementations with precoders formed using partial yet more timely local CSI. To adress this issue, we formulate a distributed precoding design problem that aims to jointly exploit timely local CSI and delayed global CSI. To the best of our knowledge, this is the first time that a similar problem is addressed in the literature. Then, we derive an optimal solution based on a novel application of the recent \textit{team} theoretical framework \cite{miretti2021team}. In addition, we discuss the structure of the optimal solution and related practical implementation aspects in C-RAN architectures. Interestingly, our theoretical and numerical results show that carefully designed implementations that delegate the computation of at least a portion of the precoders to the APs may significantly outperform both  centralized and fully distributed implementations, even under pedestrian mobility and relatively small delays.   


\textit{Notation:}
We denote by $\stdset{R}_{++}$ the set of positive reals. The Euclidean and Frobenius norms are denoted by $\|\cdot\|$ and $\|\cdot\|_{\mathsf{F}}$, respectively. Let $(\Omega,\Sigma,\mathbb{P})$ be a probability space. We denote by $\set{H}^K$ the set of random vectors, i.e., $K$-tuples of $\Sigma$-measurable functions $\Omega \to \stdset{C}$, satisfying $(\forall \rvec{x}\in \set{H}^K)$ $\E[\|\rvec{x}\|^2]<\infty$. Given a random variable $X\in \set{H}$, we denote by $\E[X]$ and $\V(X)$ its expected value and variance, respectively. 

\section{Problem statement}
\label{sec:statement}
\subsection{Basic definitions and assumptions}
We consider the downlink of a cell-free wireless network \cite{demir2021foundations} composed of $L$ APs indexed by $\mathcal{L}:=\{1,\ldots,L\}$, each of them equipped with $N$ antennas, and $K$ single-antenna user equipments (UEs) indexed by $\mathcal{K}:=\{1,\ldots,K\}$. By assuming a standard flat-fading channel model for each time-frequency resource element, we denote an arbitrary realization of the $(NL\times K)$-dimensional global channel matrix by
\begin{equation*}
\rvec{H} \eqdef \begin{bmatrix}
\rvec{h}_1 & \ldots & \rvec{h}_K
\end{bmatrix} = \begin{bmatrix}
\rvec{H}_1 \\ \vdots \\ \rvec{H}_L
\end{bmatrix} = \begin{bmatrix}
\rvec{h}_{1,1} & \ldots & \rvec{h}_{1,K} \\
\vdots &  \ddots & \vdots \\
\rvec{h}_{L,1} & \ldots & \rvec{h}_{L,K}
\end{bmatrix},
\end{equation*}
where $\rvec{h}_{l,k} \in\set{H}^{N}$ is a random vector modeling the fading state between AP~$l\in \set{L}$ and UE~$k\in \set{K}$. As customary in the channel aging literature \cite{heath2013aging, zheng2021aging}, we assume that the random channel realizations $\rvec{H}$ evolve over time according to a stationary and ergodic discrete-time random process $\{\rvec{H}[t]\}_{t\in \stdset{Z}}$, without any further specific assumption on the time correlation. In addition, we assume that the portions $\{\rvec{h}_{l,k}[t]\}_{t\in \stdset{Z}}$ of $\{\rvec{H}[t]\}_{t\in \stdset{Z}}$ corresponding to different AP-UE pairs are mutually independent random processes. 

Similarly, by focusing on simple multi-user cooperative transmission techniques based on linear precoding and on treating interference as noise \cite{demir2021foundations}, we denote an arbitrary realization of the $(NL\times K)$-dimensional joint precoding matrix by
\begin{equation*}
\rvec{T} = \begin{bmatrix}
\rvec{t}_1 & \ldots & \rvec{t}_K
\end{bmatrix} = \begin{bmatrix}
\rvec{T}_1 \\ \vdots \\ \rvec{T}_L
\end{bmatrix} =  \begin{bmatrix}
\rvec{t}_{1,1} & \ldots & \rvec{t}_{1,K} \\
\vdots &  \ddots & \vdots \\
\rvec{t}_{L,1} & \ldots & \rvec{t}_{L,K}
\end{bmatrix},
\end{equation*}
where $\rvec{t}_{l,k} \in\set{H}^{N}$ is a linear precoding vector applied by AP~$l\in \set{L}$ to the coded and modulated data stream for UE~$k\in \set{K}$. The joint precoding matrix $\rvec{T}$ evolves over time according to a random process $\{\rvec{T}[t]\}_{t\in \stdset{Z}}$, where $\rvec{T}[t]$ is adapted to the random channel realization $\rvec{H}[t]$ based on the available channel state information (CSI) at time $t\in \stdset{Z}$. In particular, we focus on the case where the submatrices $\rvec{T}_l[t]$ of $\rvec{T}[t]$ corresponding to the precoding matrices applied by each AP $l\in \set{L}$ may depend on different CSI. This aspect is treated in more details next. 

\subsection{Delayed CSI sharing}
Canonical cell-free network models based on time-division duplex operations assume each AP $l\in \set{L}$ to acquire local measurements of the downlink local channel $\rvec{H}_l = [\rvec{h}_{l,1} \; \ldots \; \rvec{h}_{l,K}]$ 
by means of uplink pilot signals. In centralized implementations \cite{demir2021foundations}, these local measurements are then typically forwarded by the APs to one or more central processors for the computation of the joint precoding matrix based on measurements of the global channel $\rvec{H}$. However, this process inevitably incurs some delay and may lead to outdated precoding decisions in many practical scenarios, making distributed implementations based on timely local measurements \cite{ngo2017cellfree} a competitive alternative. 
This observation is also corroborated by our simulations in Section~\ref{sec:sim} for relatively small delays.

In this work, we study the impact of delayed CSI sharing on performance and robust precoding design by considering the following simplified model. We assume that each AP~$l\in \set{L}$ can form its precoding matrix based on perfect instantaneous knowledge of the local channel $\rvec{H}_l[t]$, and perfect $d$-step delayed knowledge of the global channel $\rvec{H}[t]$. More precisely, we assume that the precoders $\rvec{T}_l[t]$ of AP $l\in \set{L}$ at time $t\in \stdset{Z}$ are constrained to be functions of the CSI
\begin{equation}
\label{eq:CSI}
S_l[t] \eqdef (\rvec{H}_l[t],Z[t]), \quad Z[t] \eqdef \rvec{H}[t-d], \quad d\in \stdset{N}.
\end{equation}
As we will see, the key feature of the above model is that it allows us to design robust precoders that combine the benefits of (delayed) centralized interference management with the opportunity of performing timely local refinements.     
\begin{remark}
To avoid technical digressions, cumbersome notation, and to better focus on the essence of the problem, in this study we do not consider aspects such as channel estimation errors, user-centric network clustering, and the opportunity of storing and exploiting CSI history such as $\rvec{H}[t-i]$ for $i> d$. However, these aspects can be easily incorporated in our model and results following the approach in \cite{miretti2023duality}, and will be covered in details in an extended version of this study. 
\end{remark}

\begin{remark}\label{rem:computation}
Our model and main derivations do not explicitly consider centralized precoding computation, and assume a distributed system where all precoders are locally computed by the APs after a preliminary CSI sharing step. However, we remark that all computations involving $\rvec{H}[t-d]$ only can also be implemented on a central processor. Hence, our model implicitly covers both the aforementioned centralized and fully distributed implementations, which correspond to extreme functions that discard either $\rvec{H}_l[t]$ or $\rvec{H}[t-d]$, respectively. More interestingly, our model also covers intermediate cases where the computation of the precoders is split among a central processor operating on the basis of global delayed CSI, and the APs operating on the basis of timely local CSI. Additional details are given in Section~\ref{ssec:implementation}.
\end{remark}

\subsection{Team MMSE precoding}
Following the approach in \cite{miretti2021team}, which introduced a novel non-heuristic method for optimal distributed  precoding design when the APs are endowed with different CSI, we consider the following parametric \textit{Team MMSE} precoding problem:
\begin{equation}
\label{prob:TMMSE}
\underset{\rvec{T} \in \set{T}}{\text{minimize}}~\E\left[\|\vec{P}^{\frac{1}{2}}\rvec{H}^\herm\rvec{T}-\vec{I}_K\|_{\mathsf{F}}^2\right] + \sum_{l=1}^L\sigma_l\E\left[\|\rvec{T}_l\|_{\mathsf{F}}^2\right],
\end{equation}
where $\set{T}\subseteq \set{H}^{K\times LN}$ is a given \textit{information} constraint \cite{miretti2021team, miretti2023duality} induced by the CSI structure \eqref{eq:CSI}, $\vec{P}=\text{diag}(\vec{p})$ with $\vec{p} = (p_1,\ldots,p_K) \in \stdset{R}_{++}^K$, and $\vec{\sigma}=(\sigma_1,\ldots,\sigma_L)\in \stdset{R}_{++}^L$ are given  parameters.\footnote{Note that the Team MMSE precoding problem is equivalently formulated in \cite{miretti2021team,miretti2023duality} as $K$ separate problems for each of the $K$ precoding vectors $(\rvec{t}_k)_{k=1}^K$, coupled by the problem parameters $(\vec{p},\vec{\sigma})$. The difference between \cite{miretti2021team} and \cite{miretti2023duality} is that \cite{miretti2021team} focuses on the case $\vec{\sigma}=\vec{1}$.} Note that, to improve readability of the paper, we  omit the dependency on the time index~$t$ since $\{\rvec{H}[t],S_1[t],\ldots,S_L[t]\}_{t\in \stdset{Z}}$ is a stationary random process. However, we remark that the impact of the delay $d$ is still fully captured in \eqref{eq:TMMSE} by means of the  constraint set $\set{T}$. 

Informally, the role of the constraint set $\set{T}$ is to enforce the precoders of each AP $l\in \set{L}$ to be functions of $S_l = (\rvec{H}_l,Z)$ only, where $S_l$ denotes a realization of \eqref{eq:CSI} at some arbitrary time $t\in \stdset{Z}$. As proposed in \cite{miretti2021team, miretti2023duality}, we formally define the constraint set $\set{T}$ as follows: we
let \begin{equation*}
(\forall k \in \set{K})~\rvec{t}_k\in \set{T}_k \eqdef \set{H}_1^N \times \ldots \times \set{H}_L^N,
\end{equation*}
where $\set{H}_l^N\subseteq \set{H}^N$ denotes the set of $N$-tuples of $\Sigma_l$-measurable functions $\Omega \to \stdset{C}$ satisfying $(\forall \rvec{x}\in \set{H}_l^N)$ $\E[\|\rvec{x}\|^2]<\infty$, and where $\Sigma_l \subseteq \Sigma$ is the sub-$\sigma$-algebra induced by the CSI $S_l = (\rvec{H}_l,Z)$ available at AP $l\in \set{L}$. In the team theoretical literature, $\Sigma_l$ is also called the \emph{information subfield} of AP $l$. Then, we let $\set{T}\eqdef \set{T}_1\times \ldots \times \set{T}_K$. The interested reader is referred to \cite{yukselbook} for an introduction to the measure theoretical notions used in the above definitions. However, we stress that these notions are by no means required for understanding the key results of this study.

\begin{remark}
Problem~\eqref{prob:TMMSE} can be motivated under multiple points of view. For instance, the solution to Problem~\eqref{prob:TMMSE} can be interpreted as the best distributed approximation of regularized channel inversion (recovered for $d = 0$), where the parameters $(\vec{p},\vec{\sigma})$ can be tuned to balance UE priorities and APs power consumption. Furthermore, solving \eqref{prob:TMMSE} corresponds to minimizing the individual MSE between the transmit and receive data-bearing symbols after linear processing over a dual uplink channel with UE uplink powers $\vec{p}$ and AP noise powers~$\vec{\sigma}$. Finally, an information theoretical motivation is obtained by evaluating performance using the so-called \emph{hardening} inner bound \cite{demir2021foundations} on the ergodic capacity region. Specifically, by leveraging the known uplink-downlink duality principle for fading channels under a sum power constraint (see, e.g., \cite{demir2021foundations}), \cite{miretti2021team} proves that, by choosing $\vec{\sigma} = \vec{1}$ and $\vec{p}$ such that $\sum_{k=1}^Kp_l = P$, the solution to \eqref{prob:TMMSE} is Pareto optimal, in the sense that it produces rate tuples on the boundary of the considered inner bound under a sum power constraint $P$ (and unitary noise powers). Conversely, \cite{miretti2021team} shows that all boundary points under a sum power constraint $P$ can be achieved by solutions to \eqref{prob:TMMSE} for $\vec{\sigma} = \vec{1}$ and for some $\vec{p}$ such that $\sum_{k=1}^Kp_l = P$. Furthermore, \cite{miretti2023duality} proves that all boundary points under per-AP power constraints can be achieved by solutions to \eqref{prob:TMMSE} for some $(\vec{p},\vec{\sigma})$. 
\end{remark}

The parameters $(\vec{p},\vec{\sigma})$ of Problem~\eqref{prob:TMMSE} can be tuned to maximize some network utility function under some power and/or quality of service constraints as in, e.g, \cite{miretti2022joint, miretti2023duality}, or set heuristically as for the many variants of the MMSE or regularized zero forcing precoding schemes~\cite{demir2021foundations}. Additional details on the tuning of these parameters are left for the extended version of this study, since they mostly relate to resource allocation and power control aspects that are not specific to the delayed CSI sharing model~\eqref{eq:CSI}. The rest of this study is devoted to solving the Team MMSE precoding problem \eqref{prob:TMMSE} under the considered delayed CSI sharing model~\eqref{eq:CSI}, for arbitrary problem parameters $(\vec{p},\vec{\sigma})$. Only our simulations in  Section~\ref{sec:sim} will focus on a specific example of $(\vec{p},\vec{\sigma})$.

\section{Problem solution}
\subsection{Optimal solution}
Problem~\eqref{prob:TMMSE} is a functional (i.e., infinite dimensional) optimization problem belonging to the class of \textit{team decision} problems \cite{yukselbook}, which are notoriously difficult, even when convex as in our case. The main difficulty lies in the information constraint $\set{T}$, which prevents the direct application of standard methods and numerical routines for finite dimensional convex problems. However, \cite{miretti2021team} showed that Problem~\eqref{prob:TMMSE} can be mapped to a minor variation of the subclass of \textit{quadratic} team problems \cite{yukselbook}, and, as a consequence, that the following necessary and sufficient optimality conditions hold.
\begin{proposition}
\label{prop:TMMSE}
For given $\vec{p}\in\stdset{R}_{++}^K$ and $\vec{\sigma}\in \stdset{R}_{++}^L$, Problem~\eqref{prob:TMMSE} admits a unique solution, which is also the unique $\rvec{T} \in \set{T}$ satisfying
\begin{equation}\label{eq:TMMSE}
(\forall l \in \mathcal{L})~\rvec{T}_{l} =  \rmat{F}_l\left(\vec{I}_K-\sum_{j \in \set{L}\backslash  \{l\}} \vec{P}^{\frac{1}{2}}\E\left[\rmat{H}_{j}^\herm\rvec{T}_j\Big|S_l\right] \right), 
\end{equation} 
where $\rmat{F}_l:=\left(\rmat{H}_{l}\vec{P}\rmat{H}_{l}^\herm + \sigma_l\vec{I}_N\right)^{-1}\rmat{H}_{l}\vec{P}^{\frac{1}{2}} \in \set{H}^{N\times K}$. 
\end{proposition}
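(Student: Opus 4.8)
The plan is to view \eqref{prob:TMMSE} as a strongly convex quadratic problem over a closed subspace of a Hilbert space and to read off \eqref{eq:TMMSE} from its first-order optimality condition; this is the route of \cite{miretti2021team}, which recasts the problem as (a minor variant of) a quadratic team problem and applies the associated stationarity conditions, and I sketch below a self-contained version of the argument.

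First I would check that $\set{T}=\set{T}_1\times\cdots\times\set{T}_K$ is a closed linear subspace of the Hilbert space of square-integrable $(LN\times K)$ complex random matrices, with inner product $\langle\rvec{A},\rvec{B}\rangle\eqdef\E[\mathrm{tr}(\rvec{B}^\herm\rvec{A})]$: each $\set{H}_l^N$ is built from copies of $L^2(\Omega,\Sigma_l,\mathbb{P})$, and $L^2(\Omega,\Sigma_l,\mathbb{P})$ is closed in $L^2(\Omega,\Sigma,\mathbb{P})$ (it is the range of the orthogonal projection $\E[\cdot\,|\,\Sigma_l]$). Because $\vec{\sigma}\in\stdset{R}_{++}^L$, the regularizer $\sum_l\sigma_l\E[\|\rvec{T}_l\|_{\mathsf{F}}^2]$ renders the objective $J$ strongly convex and coercive over $\set{T}$, so $J$ has a unique minimizer $\rvec{T}^\star\in\set{T}$; and since $\set{T}$ is a subspace and $J$ is convex and Gateaux differentiable, $\rvec{T}^\star$ is optimal if and only if the Gateaux derivative of $J$ at $\rvec{T}^\star$ vanishes along every $\rvec{D}=(\rvec{D}_1,\ldots,\rvec{D}_L)\in\set{T}$.

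Next I would expand $J(\rvec{T}^\star+\epsilon\rvec{D})$ to first order in $\epsilon$. Writing $\rvec{A}\eqdef\vec{P}^{\frac{1}{2}}\rvec{H}^\herm\rvec{T}^\star-\vec{I}_K$ and $\rvec{H}^\herm\rvec{D}=\sum_j\rvec{H}_j^\herm\rvec{D}_j$, and using the cyclic property of the trace, the derivative is $2\Re\sum_l\E[\mathrm{tr}(\rvec{D}_l^\herm(\rvec{H}_l\vec{P}^{\frac{1}{2}}\rvec{A}+\sigma_l\rvec{T}_l^\star))]$. The blocks $\rvec{D}_l$ may be varied independently over $l$ and, for each $l$, exhaust the $N\times K$ matrices with entries in $L^2(\Omega,\Sigma_l,\mathbb{P})$; replacing $\rvec{D}_l$ by $i\rvec{D}_l$ also kills the real part, so the optimality condition becomes $\langle\rvec{D}_l,\rvec{H}_l\vec{P}^{\frac{1}{2}}\rvec{A}+\sigma_l\rvec{T}_l^\star\rangle=0$ for all such $\rvec{D}_l$ and all $l\in\set{L}$. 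By the $L^2$-projection characterization of conditional expectation, and since $\rvec{T}_l^\star$ is $\Sigma_l$-measurable, this is equivalent to $\E[\rvec{H}_l\vec{P}^{\frac{1}{2}}\rvec{A}\,|\,S_l]+\sigma_l\rvec{T}_l^\star=0$ for every $l\in\set{L}$.

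Finally I would substitute $\rvec{A}=\vec{P}^{\frac{1}{2}}\sum_j\rvec{H}_j^\herm\rvec{T}_j^\star-\vec{I}_K$, isolate the $j=l$ summand, pull the $\Sigma_l$-measurable factors $\rvec{H}_l\vec{P}\rvec{H}_l^\herm$ and $\rvec{H}_l\vec{P}^{\frac{1}{2}}$ out of the conditional expectation, and solve for $\rvec{T}_l^\star$: since $\sigma_l>0$, the matrix $\rvec{H}_l\vec{P}\rvec{H}_l^\herm+\sigma_l\vec{I}_N$ is a.s. positive definite with bounded inverse, and the resulting identity is precisely \eqref{eq:TMMSE} with $\rmat{F}_l=(\rvec{H}_l\vec{P}\rvec{H}_l^\herm+\sigma_l\vec{I}_N)^{-1}\rvec{H}_l\vec{P}^{\frac{1}{2}}\in\set{H}^{N\times K}$ (the membership because the singular values of $\rmat{F}_l$ are a.s. bounded by $\tfrac{1}{2}\sqrt{\max_k p_k/\sigma_l}$). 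Since every step is reversible, \eqref{eq:TMMSE} characterizes the unique minimizer, which yields necessity, sufficiency and uniqueness simultaneously. The step I expect to need the most care is the passage between the variational identity and the pointwise conditional-expectation equation: making it fully rigorous requires the mild integrability conditions under which \eqref{prob:TMMSE} is well-posed, so that the cross terms above are integrable and $\rvec{H}_l\vec{P}^{\frac{1}{2}}\rvec{A}+\sigma_l\rvec{T}_l^\star$ admits a conditional expectation to which the projection identity applies — exactly the point handled by the quadratic-team-problem reduction of \cite{miretti2021team}, e.g.\ through its equivalent pointwise finite-dimensional uplink reformulation.
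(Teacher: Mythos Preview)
Your argument is correct and follows essentially the same route as the paper, which defers to \cite[Lemma~2]{miretti2021team}: strong convexity of the quadratic objective on the closed subspace $\set{T}$ gives existence and uniqueness, and the person-by-person stationarity conditions (your block-wise Gateaux derivative set to zero, then the $L^2$-projection identity for conditional expectation) yield \eqref{eq:TMMSE} as both necessary and sufficient. One inconsequential slip: the singular values of $\rmat{F}_l=(\rvec{G}\rvec{G}^\herm+\sigma_l\vec{I}_N)^{-1}\rvec{G}$ with $\rvec{G}=\rvec{H}_l\vec{P}^{1/2}$ are $s_i/(s_i^2+\sigma_l)\le 1/(2\sqrt{\sigma_l})$, so the bound does not involve $\max_k p_k$, but this does not affect your conclusion that $\rmat{F}_l\in\set{H}^{N\times K}$.
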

\begin{proof}
The proof follows readily by replacing $\vec{\sigma}=\vec{1}$ with an arbitrary $\vec{\sigma}\in \stdset{R}_{++}^L$ from the proof of \cite[Lemma~2]{miretti2021team}. 
Informally, \eqref{eq:TMMSE} is obtained by minimizing the objective in \eqref{prob:TMMSE} with respect to  $\rvec{T}_{l}$, and by fixing $\rvec{T}_j$ for $j\neq l$. This gives a set of necessary optimality conditions, related to the game theoretical notion of Nash equilibrium. The key step of the proof shows that these conditions are also sufficient.
\end{proof}

Proposition~\ref{prop:TMMSE} and its extensions covering channel estimation errors and user-centric network clustering are used in \cite{miretti2021team,miretti2023duality} to derive optimal distributed precoders under local CSI models of the type $(\forall l \in \mathcal{L})~S_l = \rvec{H}_l$ for fully distributed implementations, or under sequential CSI sharing models of the type $(\forall l \in \mathcal{L})~S_l = (\rvec{H}_1,\ldots,\rvec{H}_l)$ for partially distributed implementations exploiting the properties of serial fronthauls.
In the following, we use Proposition~\ref{prop:TMMSE} to derive the main result of this study, i.e., the solution to Problem~\eqref{prob:TMMSE} under the delayed CSI sharing model $(\forall l \in \mathcal{L})~S_l = (\rvec{H}_l,Z)$ in \eqref{eq:CSI}.

\begin{proposition}\label{prop:DTMMSE}
For given $\vec{p}\in\stdset{R}_{++}^K$ and $\vec{\sigma}\in \stdset{R}_{++}^L$, the unique solution to Problem~\eqref{prob:TMMSE} is given by
\begin{equation}\label{eq:DTMMSE}
(\forall l \in \mathcal{L})~\rvec{T}_{l} =  \rmat{F}_l\rvec{C}_l, 
\end{equation} 
where $\rmat{F}_l:=\left(\rmat{H}_{l}\vec{P}\rmat{H}_{l}^\herm + \sigma_l\vec{I}_N\right)^{-1}\rmat{H}_{l}\vec{P}^{\frac{1}{2}} \in \set{H}^{N\times K}$, and $\rvec{C}_l \in \set{H}^{K\times K}$ is given by the unique solution to the linear system of equations
\begin{equation}\label{eq:lin_syst}
(\forall l \in \mathcal{L})~\rvec{C}_l + \sum_{j\in \set{L}\backslash \{l\}}\E[\vec{P}^{\frac{1}{2}}\rvec{H}_j^\herm\rvec{F}_j|Z]\rvec{C}_j = \vec{I}_K.
\end{equation}
\end{proposition}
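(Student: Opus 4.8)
The plan is a ``guess and verify'' argument based on Proposition~\ref{prop:TMMSE}: since that proposition characterizes the solution of~\eqref{prob:TMMSE} as the \emph{unique} $\rvec{T}\in\set{T}$ obeying~\eqref{eq:TMMSE}, it suffices to exhibit a $\rvec{T}$ of the claimed form that satisfies~\eqref{eq:TMMSE}. The first task is to make sense of~\eqref{eq:lin_syst}. For each realization of $Z$ this is a square linear system in the $L$ unknown $K\times K$ matrices $\rvec{C}_1,\dots,\rvec{C}_L$, with coefficient matrices $\bar{\rvec{G}}_j:=\E[\vec{P}^{\frac{1}{2}}\rmat{H}_j^\herm\rmat{F}_j\mid Z]$ that are well defined and bounded because $\vec{P}^{\frac{1}{2}}\rmat{H}_j^\herm\rmat{F}_j=\rvec{A}_j^\herm(\rvec{A}_j\rvec{A}_j^\herm+\sigma_j\vec{I}_N)^{-1}\rvec{A}_j$, with $\rvec{A}_j:=\rmat{H}_j\vec{P}^{\frac{1}{2}}$, has all eigenvalues in $[0,1)$ (here $\sigma_j>0$ is essential). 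Hence $\vec{0}\preceq\bar{\rvec{G}}_j\prec\vec{I}_K$ almost surely. I would then show the system is nonsingular a.s.\ by analyzing its homogeneous version: if $\rvec{x}_l+\sum_{j\neq l}\bar{\rvec{G}}_j\rvec{x}_j=\vec{0}$ for all $l$, then, writing $\rvec{s}:=\sum_{j}\bar{\rvec{G}}_j\rvec{x}_j$, we get $(\vec{I}_K-\bar{\rvec{G}}_l)\rvec{x}_l=-\rvec{s}$, hence $\rvec{x}_l=-(\vec{I}_K-\bar{\rvec{G}}_l)^{-1}\rvec{s}$, and substituting back yields $\bigl(\vec{I}_K+\sum_l\bar{\rvec{G}}_l(\vec{I}_K-\bar{\rvec{G}}_l)^{-1}\bigr)\rvec{s}=\vec{0}$; since each $\bar{\rvec{G}}_l(\vec{I}_K-\bar{\rvec{G}}_l)^{-1}$ is positive semidefinite, the bracketed matrix is $\succeq\vec{I}_K$, so $\rvec{s}=\vec{0}$ and then $\rvec{x}_l=\vec{0}$ for all $l$. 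Thus~\eqref{eq:lin_syst} has a unique solution $(\rvec{C}_l)$, depending measurably on $Z$; the remaining check that $\rvec{C}_l\in\set{H}^{K\times K}$ is routine.

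The second step is to verify that $\rvec{T}_l:=\rmat{F}_l\rvec{C}_l$ obeys~\eqref{eq:TMMSE} and lies in $\set{T}_l$. The latter is immediate, since $\rmat{F}_l$ is a function of $\rmat{H}_l$ and $\rvec{C}_l$ a function of $Z$, so $\rvec{T}_l$ is $\Sigma_l$-measurable. For~\eqref{eq:TMMSE}, the key structural fact---and, conceptually, the heart of the result---is that by the CSI model~\eqref{eq:CSI} together with the mutual independence of the per-AP channel processes, for $j\neq l$ the present channel $\rmat{H}_j[t]$ is conditionally independent of $\rmat{H}_l[t]$ given $Z=\rmat{H}[t-d]$ (conditionally on $Z$, the law of $\rmat{H}_j[t]$ depends only on the sub-block $\rmat{H}_j[t-d]$). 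Hence, using that $\rvec{C}_j$ is a function of $Z$ and $\rmat{H}_j^\herm\rmat{F}_j$ a function of $\rmat{H}_j[t]$,
\[
\E\!\left[\rmat{H}_j^\herm\rvec{T}_j\mid S_l\right]=\E\!\left[\rmat{H}_j^\herm\rmat{F}_j\mid S_l\right]\rvec{C}_j=\E\!\left[\rmat{H}_j^\herm\rmat{F}_j\mid Z\right]\rvec{C}_j ,
\]
so the right-hand side of~\eqref{eq:TMMSE} reduces to $\rmat{F}_l\bigl(\vec{I}_K-\sum_{j\neq l}\E[\vec{P}^{\frac{1}{2}}\rmat{H}_j^\herm\rmat{F}_j\mid Z]\rvec{C}_j\bigr)$, which equals $\rmat{F}_l\rvec{C}_l=\rvec{T}_l$ precisely because $(\rvec{C}_l)$ solves~\eqref{eq:lin_syst}. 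By Proposition~\ref{prop:TMMSE}, this $\rvec{T}$ is then the unique solution of~\eqref{prob:TMMSE}.

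I expect the unique solvability of~\eqref{eq:lin_syst} to be the main obstacle. The information constraint rules out a naive contraction argument---the block operator behind~\eqref{eq:lin_syst} need not be a contraction, since the $\bar{\rvec{G}}_j$ can have norm close to one---so the proof must instead lean on the positive-semidefinite structure of the blocks and, crucially, on the \emph{strict} bound $\bar{\rvec{G}}_j\prec\vec{I}_K$ inherited from $\sigma_j>0$; once the reduction to $\bigl(\vec{I}_K+\sum_l\bar{\rvec{G}}_l(\vec{I}_K-\bar{\rvec{G}}_l)^{-1}\bigr)\rvec{s}=\vec{0}$ is in place, the argument closes at once. The verification step, by contrast, is technically light once the conditional-independence property of the CSI model~\eqref{eq:CSI} has been identified---this is where the ``timely local plus delayed global'' structure collapses the coupled team problem into a single centralized linear system parametrized by $Z$.
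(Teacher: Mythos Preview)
Your proposal is correct and follows the same ``guess and verify'' approach as the paper: the paper's proof is a one-line sketch stating that one verifies \eqref{eq:DTMMSE} satisfies \eqref{eq:TMMSE} via simple algebraic manipulations, which is exactly your second step. You actually go beyond the paper by supplying a clean argument for the unique solvability of \eqref{eq:lin_syst} (via the reduction to $\bigl(\vec{I}_K+\sum_l\bar{\rvec{G}}_l(\vec{I}_K-\bar{\rvec{G}}_l)^{-1}\bigr)\rvec{s}=\vec{0}$), whereas the paper only asserts uniqueness without proof.
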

\begin{proof}(Sketch)
The proof follows by verifying that \eqref{eq:DTMMSE} satisfies \eqref{eq:TMMSE} via simple algebraic manipulations. 
\end{proof}

\subsection{Interpretation and computation of the optimal solution} 
\label{ssec:computation}
Proposition~\ref{prop:DTMMSE} states the optimality of the two-stage precoding structure in \eqref{eq:DTMMSE}, where each stage depends on a distinct portion of the CSI $S_l = (\rvec{H}_l,Z)$. Specifically, the optimal precoder $\rvec{T}_l$ at AP~$l\in \set{L}$ is composed by a first $N\times K$ \textit{local} MMSE precoding stage $\rvec{F}_l$ \cite{demir2021foundations}, function of the timely local CSI $\rvec{H}_l$, and a second $K\times K$ precoding stage $\rvec{C}_l$, function of the delayed global CSI $Z$. To better understand the dependency of $\rvec{C}_l$ on $Z$, we observe that the linear system of equations \eqref{eq:lin_syst} defining $\rvec{C}_l$ has random coefficients $\E[\vec{P}^{\frac{1}{2}}\rvec{H}_l^\herm\rvec{F}_l|Z]$ which are functions of $Z$. In particular, each realization of the $L$ precoding stages $(\rvec{C}_l)_{l=1}^L$ can be obtained by solving \eqref{eq:lin_syst} disjointly for each realization of $Z$. More precisely, a realization $(\vec{C}_l)_{l=1}^L$ of $(\rvec{C}_l)_{l=1}^L$ for a given realization $z$ of $Z$ is given by the solution to the finite dimensional linear system of equations 
\begin{equation*}
(\forall l \in \mathcal{L})~\vec{C}_l + \sum_{j\in \set{L}\backslash \{l\}}\E[\vec{P}^{\frac{1}{2}}\rvec{H}_j^\herm\rvec{F}_j|Z=z]\vec{C}_j = \vec{I}_K,
\end{equation*}
which can be computed using standard techniques, provided that the coefficients $\E[\vec{P}^{\frac{1}{2}}\rvec{H}_l^\herm\rvec{F}_l|Z=z]$ are known.

By reintroducing the time index~$t$, we notice that these coefficients take the form $\E[\vec{P}^{\frac{1}{2}}\rvec{H}_l[t]^\herm\rvec{F}_l[t]|Z[t]]=$
\begin{align*}
\E[\vec{P}^{\frac{1}{2}}\rvec{H}_l[t]^\herm\left(\rmat{H}_{l}[t]\vec{P}\rmat{H}_{l}[t]^\herm + \sigma_l\vec{I}_N\right)^{-1}\rmat{H}_{l}[t]\vec{P}^{\frac{1}{2}}|\rvec{H}_l[t-d]],
\end{align*}
i.e., they are functions of $\rvec{H}_l[t-d]$ defined by the conditional distribution of $\rvec{H}_l[t]$ given $\rvec{H}_l[t-d]$, which we recall is independent of $t$ due to stationarity. Importantly, these coefficients can be computed in parallel for each AP. 
Furthermore, we notice that they can be interpreted as the optimal $d$-step MMSE predictors of the $K\times K$ effective channels $\vec{P}^{\frac{1}{2}}\rvec{H}_l[t]^\herm\rvec{F}_l[t]$ after local MMSE precoding. Unfortunately, closed-form expressions for the coefficients $\E[\vec{P}^{\frac{1}{2}}\rvec{H}_l^\herm\rvec{F}_l|Z]$ may not be available in many practical cases. However, we remark that many approximate numerical methods taken from the vast literature on estimation/prediction theory could be potentially applied. Of particular practical interest are data driven techniques that do not require explicit knowledge of the conditional distribution of $\rvec{H}_l[t]$ given $\rvec{H}_l[t-d]$. For simplicity, in this work, we evaluate numerically each expectation using an empirical average over a sample set generated according to the conditional distribution of $\rvec{H}_l[t]$ given $\rvec{H}_l[t-d]$, assumed known. In addition, in Section~\ref{ssec:approximate}, we discuss some suboptimal approximations. We leave the evaluation of more advanced techniques as a promising future line of research. 

\subsection{C-RAN functional split aspects}
\label{ssec:implementation}
\begin{figure*}[!t]
\centerline{\subfloat[]{\begin{overpic}[width=2in,tics=20]{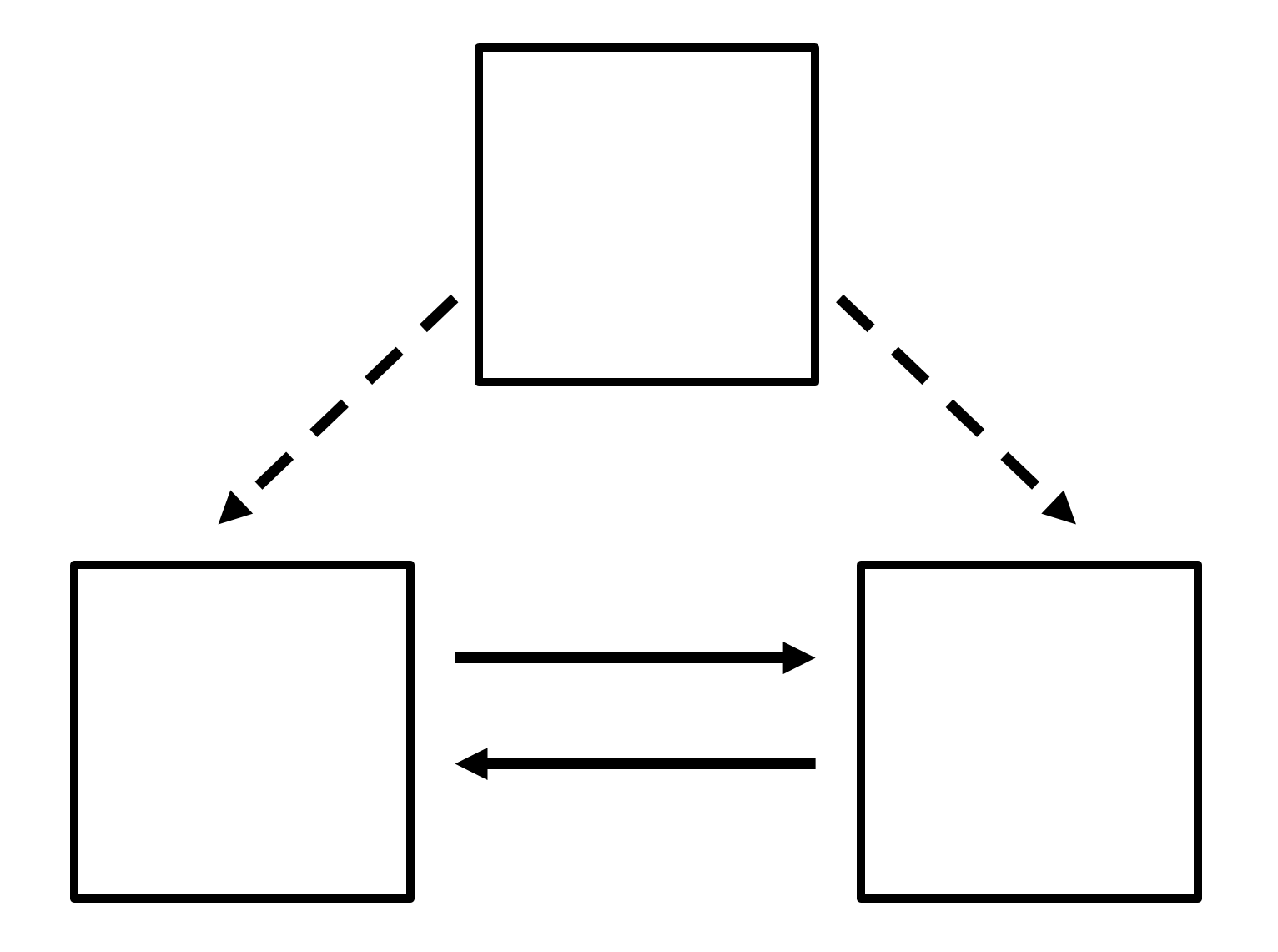}
 \put(12,20){AP$\,1$}
 \put(12,10){$\rvec{H}_1[t]$}
 \put(73,20){AP$\,2$}
 \put(73,10){$\rvec{H}_2[t]$}
 \put(44,55){CPU}
 \put(36,27){$\rvec{H}_1[t-d]$}
 \put(36,6){$\rvec{H}_2[t-d]$}
 \put(26,37){$\rvec{u}[t]$}
 \put(64,37){$\rvec{u}[t]$}
\end{overpic}
\label{fig:distributed}}
\hfil
\subfloat[]{\begin{overpic}[width=2in,tics=20]{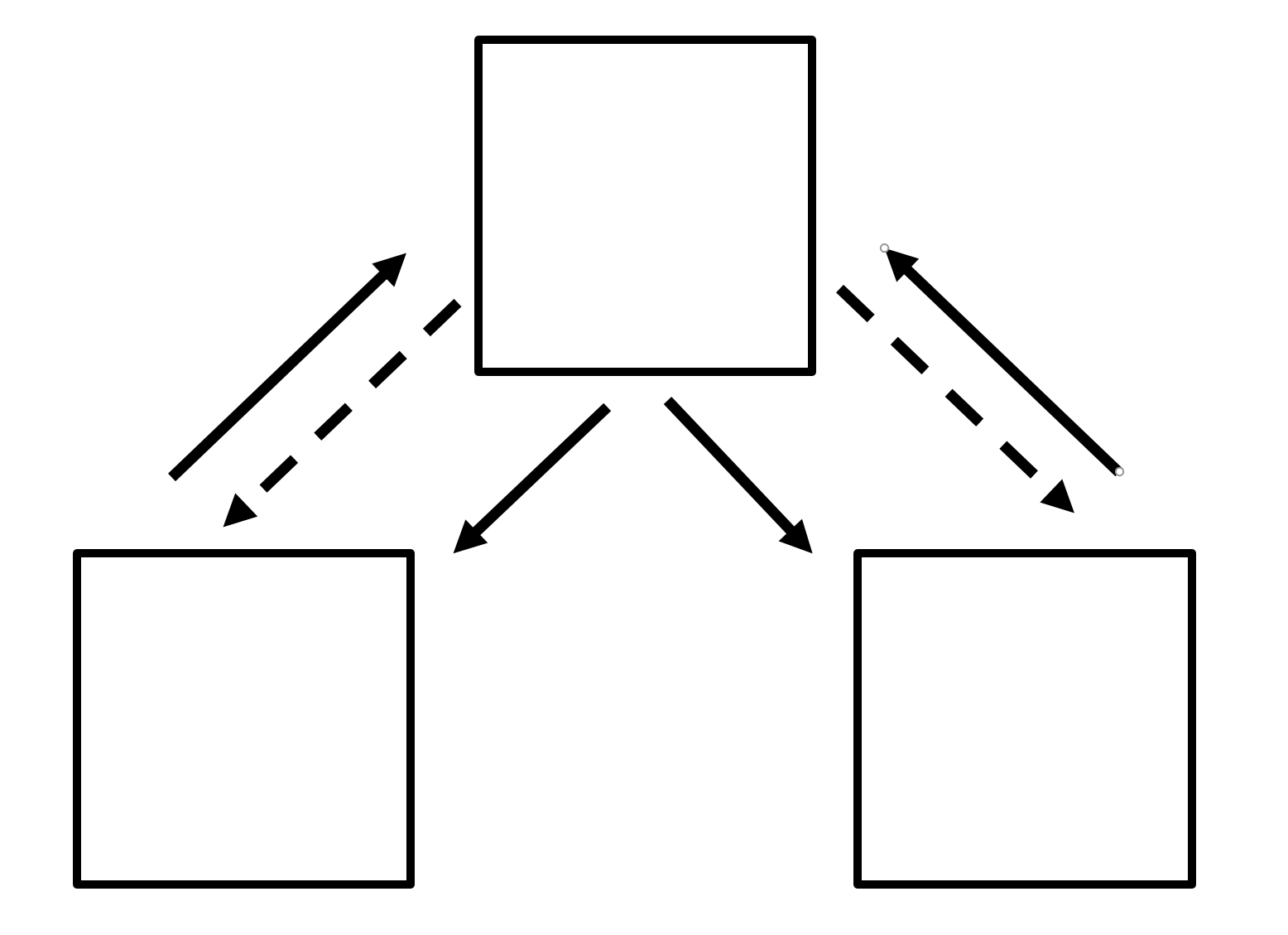}
 \put(12,20){AP$\,1$}
 \put(12,10){$\rvec{H}_1[t]$}
 \put(73,20){AP$\,2$}
 \put(73,10){$\rvec{H}_2[t]$}
 \put(44,55){CPU}
 \put(0,55){$\rvec{H}_1[t-d]$}
 \put(70,55){$\rvec{H}_2[t-d]$}
 \put(33,23){$\rvec{C}_1[t]\;\, \rvec{C}_2[t]$}
 \put(28,37){$\rvec{u}[t]$}
 \put(62,37){$\rvec{u}[t]$}
\end{overpic}
\label{fig:centralized_1}}
\hfil
\subfloat[]{\begin{overpic}[width=2in,tics=20]{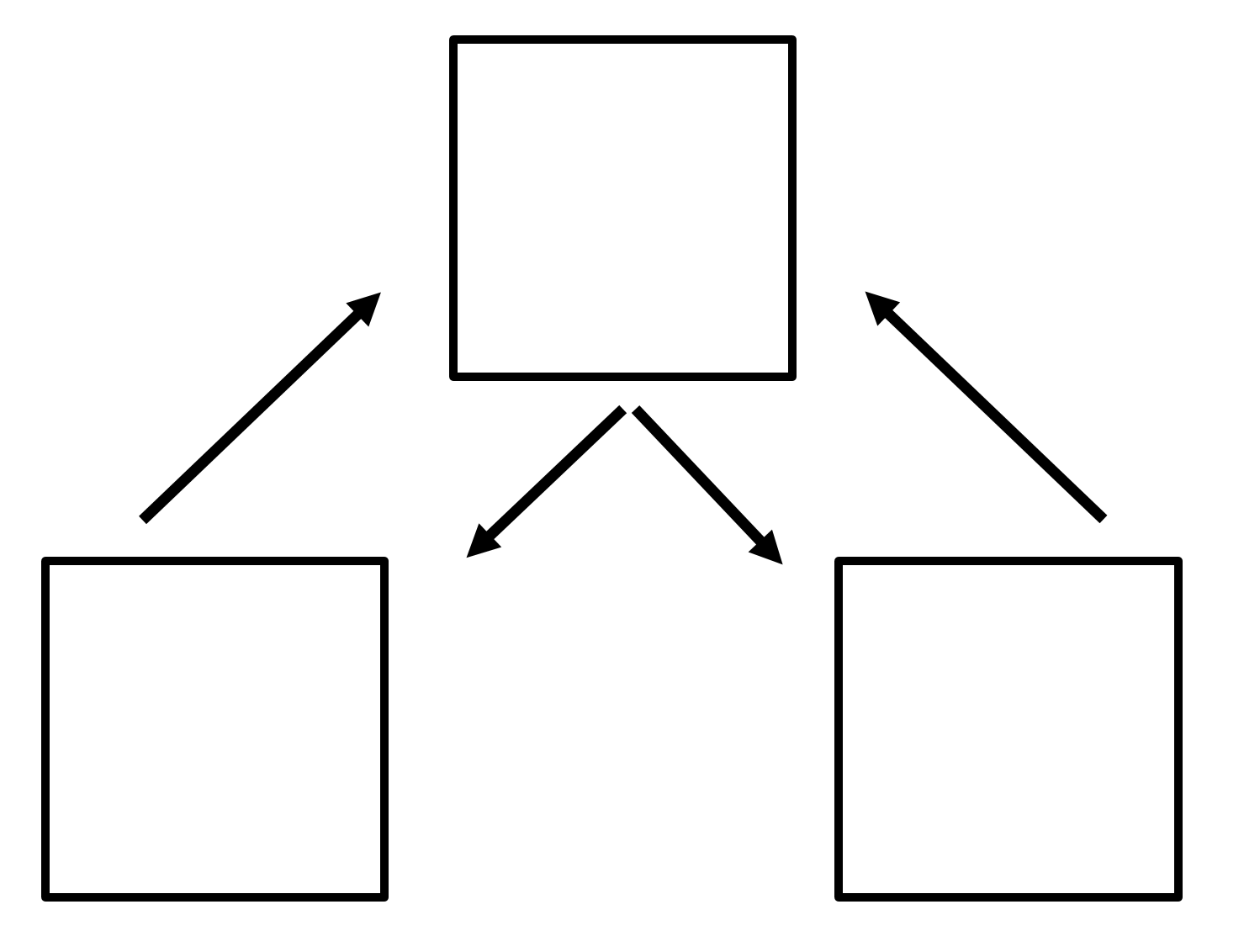}
 \put(11,20){AP$\,1$}
 \put(11,10){$\rvec{H}_1[t]$}
 \put(73,20){AP$\,2$}
 \put(73,10){$\rvec{H}_2[t]$}
 \put(44,57){CPU}
 \put(0,55){$\rvec{H}_1[t-d]$}
 \put(70,55){$\rvec{H}_2[t-d]$}
 \put(20,37){$\rvec{C}_1\rvec{u}[t]$}
 \put(60,37){$\rvec{C}_2\rvec{u}[t]$}
\end{overpic}
\label{fig:centralized_2}}}
\caption{Pictorial representation of possible implementations of the proposed team MMSE solution \eqref{eq:DTMMSE} in C-RAN architectures: (a) distributed precoding with  CSI sharing; (b) locally refined centralized precoding (compress-before-precoding); (c) locally refined centralized precoding  (compress-after-precoding). }
\label{fig:implementation}
\vspace{-0.5cm}
\end{figure*}
In this section we discuss the implementation of the optimal two-stage precoding structure identified in \eqref{eq:DTMMSE} by focusing on the functional split problem in C-RAN architectures \cite{kang2015fronthaul}. First, we recall that the considered solution in \eqref{eq:DTMMSE} is derived under the delayed CSI sharing model~\eqref{eq:CSI}, by assuming that the precoders are computed locally at each AP after a preliminary CSI sharing step. This model can be directly mapped to the functional split depicted in Fig.~\ref{fig:distributed}, where (from a physical layer perspective) the CPU only forms and forwards the $K$-dimensional vector of coded and modulated data streams $\rvec{u}[t]$. This is the closest implementation to the original fully distributed cell-free massive MIMO concept proposed in \cite{ngo2017cellfree}. 

However, as anticipated in Remark~\ref{rem:computation}, \eqref{eq:DTMMSE} can also be implemented by splitting the computation of the two stages between the APs and the CPU, as depicted in Fig.~\ref{fig:centralized_1} and Fig.~\ref{fig:centralized_2}. These implementations are closer to the centralized cell-free massive MIMO concept described, e.g., in \cite{demir2021foundations}. In both these functional splits, the CPU forms the precoding stages $\rvec{C}_l[t]$ based on delayed global CSI, and the APs form their local MMSE stages $\rvec{F}_l[t]$ based on timely local CSI. The difference between these two functional splits is that in Fig.~\ref{fig:centralized_1} both precoding stages are applied to the data streams by the APs, while in Fig.~\ref{fig:centralized_2} the CPU computes and forwards the $K$-dimensional intermediate signals $\rvec{C}_l[t]\rvec{u}[t]$ for each AP.\footnote{In both cases, the delay $d$ should be rather interpreted as the round-trip delay incurred by the two-way information sharing procedure.}

Choosing the best functional split is a notoriously challenging problem encompassing many different system aspects. For instance, Fig.~\ref{fig:centralized_1} and Fig.~\ref{fig:centralized_2} can be respectively interpreted as novel distributed versions of the \textit{compress-before-precoding} and \textit{compress-after-precoding} functional splits compared in \cite{kang2015fronthaul} in terms of fronthaul rate requirements. Interestingly, the results of this study may be used as a novel approach to extend the current literature on C-RAN functional splits covering fronthaul and processing delay requirements. 

We conclude this section by pointing out that, in the current form, none of the implementations in Fig.~\ref{fig:implementation} is scalable with respect to the number of UEs $K$. This is mostly due to the fact that the information to be shared and processed is proportional to $K$, as in \cite{ngo2017cellfree}. However, we remark that this issue can be significantly mitigated by omitting the sharing and processing of information that does not contribute significantly to performance, for instance because of large path loss between a certain UE-AP pair. A popular way of implementing this idea is via the user-centric network clustering paradigm \cite{demir2021foundations}. The extension of our results to this paradigm can be done as in \cite{miretti2023duality}, and it is left for the extended version of this study.     
 
\subsection{Suboptimal solutions}
\label{ssec:approximate}
\subsubsection{Local precoding}
By discarding the potentially useful information $\rvec{H}[t-d]$ in \eqref{eq:CSI}, i.e., by letting $(\forall l\in \set{L})~S_l[t] = \rvec{H}_l[t]$, the solution to Problem~\eqref{prob:TMMSE} is given by a variant of \eqref{eq:DTMMSE} with
$(\forall l\in \set{L})~\rvec{C}_l = \vec{C}_l$, where $(\vec{C}_l)_{l=1}^L$ are fixed (deterministic) precoding stages given by the solution to 
\begin{equation}\label{eq:LTMMSE}
(\forall l \in \mathcal{L})~\vec{C}_l + \sum_{j\in \set{L}\backslash \{l\}}\E[\vec{P}^{\frac{1}{2}}\rvec{H}_j^\herm\rvec{F}_j]\vec{C}_j = \vec{I}_K.
\end{equation}
This corresponds to the local \textit{team MMSE} solution derived in \cite{miretti2021team,miretti2023duality}, which we recall is an enhanced version of the known local MMSE scheme and its variants \cite{demir2021foundations}. 

\subsubsection{Centralized (delay-tolerant) precoding}
On the other extreme, by discarding  $(\rvec{H}_l[t])_{l=1}^L$ in \eqref{eq:CSI}, i.e., by letting $(\forall l\in \set{L})~S_l[t] = \rvec{H}[t-d]$, the solution to Problem~\eqref{prob:TMMSE} becomes
\begin{equation}\label{eq:CMMSE}
\rvec{T}[t] = \left(\hat{\rmat{H}}[t]\vec{P}\hat{\rmat{H}}[t]^\herm + \vec{\Psi}+\vec{\Sigma}\right)^{-1}\hat{\rmat{H}}[t]\vec{P}^{\frac{1}{2}},
\end{equation}
where $\hat{\rvec{H}}[t] \eqdef \E[\rvec{H}[t]|\rvec{H}[t-d]]$, $\vec{\Psi}\eqdef \E[(\hat{\rvec{H}}[t]-\rvec{H}[t])\vec{P}(\hat{\rvec{H}}[t]-\rvec{H}[t])^\herm]$, and $\vec{\Sigma}\eqdef \text{blkdiag}(\sigma_1\vec{I}_N,\ldots,\sigma_L\vec{I}_N)$. This essentially corresponds to the known centralized MMSE scheme in \cite{demir2021foundations} or to the delay-tolerant zero-forcing scheme based on channel prediction in \cite{schotten2022delay}, carefully optimized and adapted to our setup to ensure a fair comparison against \eqref{eq:DTMMSE}.  


\subsubsection{Na\"ive distributed precoding}
A simple baseline distributed precoding scheme that takes into account the full information in \eqref{eq:CSI} is given by letting each AP $l\in \set{L}$ locally compute a version of the centralized  solution \eqref{eq:CMMSE} based on its local estimate of the global channel state, obtained by replacing the submatrix $\hat{\rvec{H}}_l[t]\eqdef \E[\rvec{H}_l[t]|\rvec{H}_l[t-d]]$ of $\hat{\rvec{H}}[t]$ with $\rvec{H}_l[t]$. This baseline approach was termed \textit{na\"ive} precoding in the early works on distributed precoding \cite{bazco2022decentralized}, since it essentially corresponds to letting each AP believe that its  information on $\rvec{H}[t]$ is the same information at all APs.    

\subsubsection{Structure-aware distributed precoding} As an alternative to the above baseline distributed precoding scheme that takes into account the structure of the optimal solution, we propose a variant of \eqref{eq:DTMMSE} based on approximating the coefficients of the linear system in \eqref{eq:lin_syst} as $\E[\vec{P}^{\frac{1}{2}}\rvec{H}_l[t]^\herm\rvec{F}_l[t]|\rvec{H}_l[t-d]] \approx$
\begin{equation*}
\vec{P}^{\frac{1}{2}}\hat{\rvec{H}}_l[t]^\herm\left(\hat{\rvec{H}}_l[t]\vec{P}\hat{\rvec{H}}_{l}[t]^\herm + \vec{\Psi}_l + \sigma_l\vec{I}_N\right)^{-1}\hat{\rvec{H}}_{l}[t]\vec{P}^{\frac{1}{2}},
\end{equation*}
where $\vec{\Psi}_l \eqdef \E[(\hat{\rvec{H}}_l[t]-\rvec{H}_l[t])\vec{P}(\hat{\rvec{H}}_l[t]-\rvec{H}_l[t])^\herm]$.
This is similar to the centralized approach described above, but confined to the computation of the precoding stages $(\rvec{C}_l[t])_{l=1}^L$. 


\section{Numerical simulations and conclusions}
\label{sec:sim}
We consider a network composed by $K=50$ UEs are uniformly distributed within a squared service area of size $0.5\times 0.5~\text{km}^2$, and $L=16$ regularly spaced APs with $N=4$ antennas each. By neglecting for simplicity spatial correlation, we let $\rvec{h}_{l,k}$ be independently distributed as $\rvec{h}_{l,k} \sim \CN\left(\vec{0}, \gamma_{l,k}\vec{I}_N\right)$, where $\gamma_{l,k}>0$ denotes the channel gain between AP $l$ and UE $k$. We follow the same 3GPP-like path-loss model adopted in \cite{demir2021foundations} for a $2$ GHz carrier frequency:
\begin{equation*}
\gamma_{l,k} = -36.7 \log_{10}\left(D_{l,k}/1 \; \text{m}\right) -30.5 + Z_{l,k} -\sigma^2 \quad \text{[dB]},
\end{equation*}
where $D_{l,k}$ is the distance between AP $l$ and UE $k$ including a difference in height of $10$ m, and $Z_{l,k}\sim \set{N}(0,\rho^2)$ [dB] are shadow fading terms with deviation $\rho = 4$. The shadow fading is correlated as $\E[Z_{l,k}Z_{j,i}]=\rho^22^{-\frac{\delta_{k,i}}{9 \text{ [m]}}}$ for all $l=j$ and zero otherwise, where $\delta_{k,i}$ is the distance between UE $k$ and UE~$i$. The noise power is $\sigma^2 = -174 + 10 \log_{10}(B/1\;\text{Hz}) + F$ [dBm], where $B = 20$ MHz is the bandwidth, and $F = 7$ dB is the noise figure. 

The time evolution of the channel is modeled as in \cite{heath2013aging} and many related studies by assuming that each $\{\rvec{h}_{l,k}[t]\}_{t\in\stdset{Z}}$ is a zero-mean stationary ergodic complex Gaussian process, where the joint distribution of $(\rvec{h}_{l,k}[t],\rvec{h}_{l,k}[t-d])$ is fully characterized by the autocovariance matrix $\E[\rvec{h}_{l,k}[t]\rvec{h}_{l,k}[t-d]^\herm]=r_{l,k}\gamma_{l,k}\vec{I}_N$ for a given autocorrelation coefficient $r_{l,k}\in [0,1]$. The autocorrelation coefficient can be used to model the joint impact of the CSI sharing delay $d$ and UE mobility, for example by using Clarke's model $r_{l,k} = J_0(2\pi\nu_{l,k}Td)$ as in \cite{heath2013aging}, where $\nu_{l,k}$ denotes the Doppler spread for the $(l,k)$th AP-UE pair, and $T$ is the symbol time. We focus on two representative scenarios obtained by letting $(\forall l \in \set{L})(\forall k \in \set{K})~r_{l,k}= r \in \{0.99,0.9\}$, which, according to Clarke's model, can be mapped to pedestrian mobility $(\nu_{l,k} \approx 10$~Hz) for all UEs and delay $Td \in \{1~\text{ms},10~\text{ms}\}$. 

\subsection{Figure-of-merit}
We evaluate performance in terms of downlink ergodic achievable rates estimated by the hardening inner bound \cite{demir2021foundations} for a given (network-wide) sum power $P = 5$ W. To facilitate the connection with the solution to Problem~\eqref{prob:TMMSE}, we leverage the known uplink-downlink duality principle for fading channels (see, e.g., \cite{demir2021foundations}) and compute the downlink rate of each UE~$k\in \set{K}$ for a given precoding scheme $\rvec{T}$ and for some downlink power allocation policy using the equivalent expressions
\begin{equation*}
R_k(\rvec{T},\vec{p}) \eqdef \log_2(1+\mathrm{SINR}_k(\rvec{t}_k,\vec{p})),
\end{equation*}
\begin{equation*}
\mathrm{SINR}_k(\rvec{t}_k,\vec{p}) \eqdef \resizebox{0.69\linewidth}{!}{$\dfrac{p_k|\E[\rvec{h}_k^\herm\rvec{t}_k]|^2}{p_k\V(\rvec{h}_k^\herm\rvec{t}_k)+\underset{j\neq k}{\sum} p_j\E[|\rvec{h}_j^\herm\rvec{t}_k|^2]+\E[\|\rvec{t}_k\|^2]}$},
\end{equation*}
corresponding to the achievable rates over a virtual uplink channel for some virtual uplink powers $\vec{p} \in \stdset{R}_+^K$ satisfying $\sum_{k=1}^Kp_k = P$.  We recall that the uplink-downlink duality principle guarantees the existance of a downlink power allocation such that the same rates are also achievable in the downlink using $\rvec{T}$. It can be shown that the above uplink rates are maximized by the solution to Problem~\eqref{prob:TMMSE} with problem parameters $\vec{p}$ equal to the uplink powers and $\vec{\sigma}=\vec{1}$ \cite{miretti2021team,miretti2022joint}. As an example, we choose a fractional power allocation policy with exponent $-1$ \cite{lozano2020fractional}, i.e., we let  $(\forall k\in \set{K})~p_k \propto (\sum_{l=1}^L\gamma_{l,k})^{-1}$, which approximates a max-min fair policy. 

\subsection{Results}
\begin{figure*}[!t]
\centering
\subfloat[]{\includegraphics[width=0.84\columnwidth]{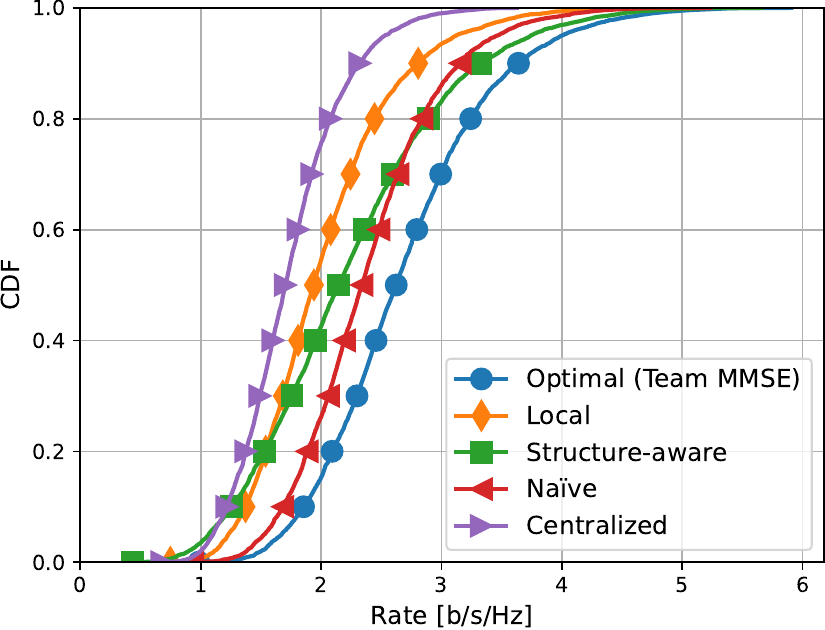}
\label{fig_first_case}}
\hfil
\subfloat[]{\includegraphics[width=0.84\columnwidth]{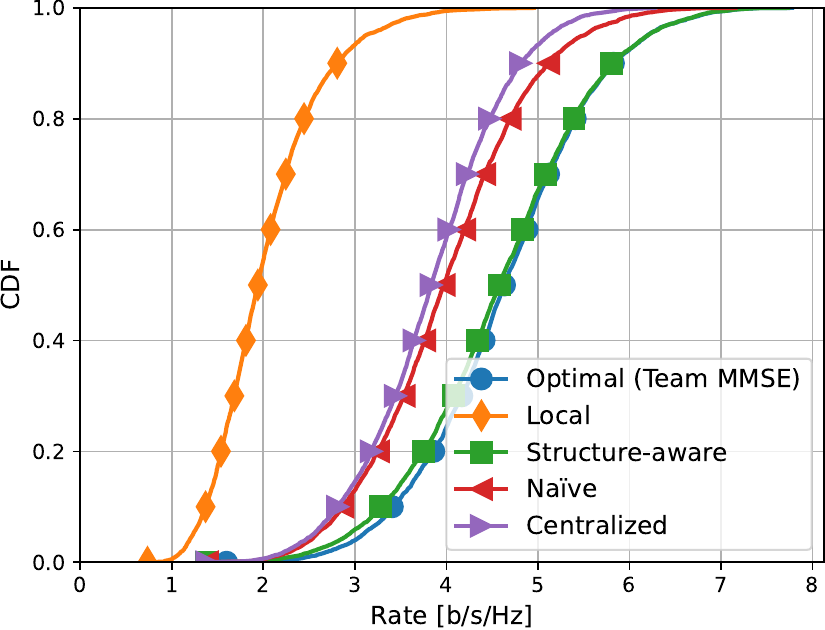}
\label{fig_second_case}}
\caption{Cumulative density function of  downlink ergodic rates achieved by the optimal team MMSE solution \eqref{eq:DTMMSE} and the suboptimal designs in Sect.~\ref{ssec:approximate}, assuming pedestrian mobility and a CSI sharing delay of (a) $10$ ms ($r = 0.9$) and (b) $1$ ms ($r = 0.99$).}
\label{fig_sim}
\vspace{-0.57cm}
\end{figure*}
Figure~\ref{fig_sim} shows the cumulative density function of the ergodic rates achieved by different precoding schemes, approximated using $100$ independent user drops and $100$ independent realizations of $(\rvec{H}[t],\rvec{H}[t-d])$ per user drop. In particular, we compare the performance of the optimal Team MMSE solution \eqref{eq:TMMSE} against the suboptimal solutions described in Section~\ref{ssec:approximate}. We remark that the schemes termed \textit{centralized} and \textit{local} correspond, respectively, to the best known schemes for the centralized and (fully) distributed cell-free massive MIMO implementations described in \cite{demir2021foundations}. The Team MMSE solution is computed via the numerical method described in  Section~\ref{ssec:computation}. 

A first important observation is that, even for pedestrian mobility, a CSI sharing delay of $10$ ms ($r = 0.9$, Figure~\ref{fig_first_case}) can significantly degrade the performance of centralized precoding to the point where it becomes noticeably worse than the performance of local precoding. Moreover, a second important observation is that that the proposed Team MMSE precoding scheme largely outperforms both centralized and local precoding in all the considered scenarios. Perhaps surprisingly, the gains are significant even for a relatively small CSI sharing delay of $1$ ms ($r = 0.99$, Figure~\ref{fig_second_case}), a scenario where centralized precoding is still quite effective and outperforms local precoding. This suggests that the APs' local interference management capabilities based on timely local CSI should not be neglected in most practical scenarios. Indeed, our results suggests that significant rate gains can be achieved by a careful distributed precoding design, such as the proposed Team MMSE scheme, that is able to merge the benefits of local and centralized interference management. 

Finally, Figure~\ref{fig_second_case} shows that, for small CSI sharing delays, the aforementioned rate gains can be achieved by simple approximations of the Team MMSE scheme. In particular, in the considered scenario, we observe that the proposed structure-aware distributed precoding scheme is able to exploit the benefits of local and centralized interference management, at a significantly lower computational cost than the Team MMSE scheme. However, Figure~\ref{fig_first_case} shows that this may not be the case for higher CSI sharing delays, since the Team MMSE solution shows non-negligible performance gains.

\bibliographystyle{IEEEbib}
\bibliography{IEEEabrv,refs}

\begin{thebibliography}{10}

\bibitem{demir2021foundations}
{\"O}.~T. Demir, E.~Bj{\"o}rnson, and L.~Sanguinetti,
\newblock ``Foundations of user-centric cell-free massive {MIMO},''
\newblock {\em Foundations and Trends{\textregistered} in Signal Processing},
  vol. 14, no. 3-4, pp. 162--472, 2021.

\bibitem{ngo2017cellfree}
H.~Q. {Ngo}, A.~{Ashikhmin}, H.~{Yang}, E.~G. {Larsson}, and T.~L. {Marzetta},
\newblock ``Cell-free massive {MIMO} versus small cells,''
\newblock {\em IEEE Trans. Wireless Commun.}, vol. 16, no. 3, pp. 1834--1850,
  Mar. 2017.

\bibitem{bashar2019quantization}
M.~Bashar, K.~Cumanan, A.~G. Burr, H.~Q. Ngo, M.~Debbah, and P.~Xiao,
\newblock ``Max–min rate of cell-free massive {MIMO} uplink with optimal
  uniform quantization,''
\newblock vol. 67, no. 10, pp. 6796--6815, Oct. 2019.

\bibitem{interdonato2019ubiquitous}
G.~Interdonato, E.~Bj{\"o}rnson, H.~Q. Ngo, P.~Frenger, and E.~G. Larsson,
\newblock ``Ubiquitous cell-free massive {MIMO} communications,''
\newblock {\em EURASIP J. Wireless Commun. and Netw.}, vol. 2019, no. 1, pp.
  197, 2019.

\bibitem{hu2019adc}
X.~Hu, C.~Zhong, X.~Chen, W.~Xu, H.~Lin, and Z.~Zhang,
\newblock ``Cell-free massive {MIMO} systems with low resolution {ADCs},''
\newblock {\em IEEE Trans. Commun.}, vol. 67, no. 10, pp. 6844--6857, 2019.

\bibitem{lozano2020fractional}
R.~Nikbakht, R.~Mosayebi, and A.~Lozano,
\newblock ``Uplink fractional power control and downlink power allocation for
  cell-free networks,''
\newblock {\em {IEEE} Wireless Commun. Lett.}, vol. 9, no. 6, pp. 774--777,
  2020.

\bibitem{buzzi2020}
S.~Buzzi, C.~D’Andrea, A.~Zappone, and C.~D’Elia,
\newblock ``User-centric {5G} cellular networks: Resource allocation and
  comparison with the cell-free massive {MIMO} approach,''
\newblock {\em {IEEE} Trans. Wireless Commun.}, vol. 19, no. 2, pp. 1250--1264,
  Feb. 2020.

\bibitem{gottsch2022subspace}
F.~G{\"o}ttsch, N.~Osawa, T.~Ohseki, K.~Yamazaki, and G.~Caire,
\newblock ``Subspace-based pilot decontamination in user-centric scalable
  cell-free wireless networks,''
\newblock {\em IEEE Trans. Wireless Commun.}, vol. 22, no. 6, pp. 4117--4131,
  2023.

\bibitem{kang2015fronthaul}
J.~Kang, O.~Simeone, J.~Kang, and S.~Shamai,
\newblock ``Fronthaul compression and precoding design for {C-RAN}s over
  ergodic fading channels,''
\newblock {\em IEEE Trans. Veh. Technol.}, vol. 65, no. 7, pp. 5022--5032,
  2015.

\bibitem{shaik2021mmse}
Z.~H. Shaik, E.~Bj{\"o}rnson, and E.~G. Larsson,
\newblock ``{MMSE}-optimal sequential processing for cell-free massive {MIMO}
  with radio stripes,''
\newblock {\em {IEEE} Trans. Commun.}, vol. 69, no. 11, Nov. 2021.

\bibitem{miretti2021team}
L.~Miretti, E.~Björnson, and D.~Gesbert,
\newblock ``Team {MMSE} precoding with applications to cell-free massive
  {MIMO},''
\newblock {\em {IEEE} Trans. Wireless Commun.}, vol. 21, no. 8, pp. 6242--6255,
  Aug. 2022.

\bibitem{italo2021ota}
I.~Atzeni, B.~Gouda, and A.~T\"olli,
\newblock ``Distributed precoding design via over-the-air signaling for
  cell-free massive {MIMO},''
\newblock {\em {IEEE} Trans. Wireless Commun.}, vol. 20, no. 2, pp. 1201--1216,
  Feb. 2021.

\bibitem{heath2013aging}
K.~T. Truong and R.~W. Heath,
\newblock ``Effects of channel aging in massive {MIMO} systems,''
\newblock {\em Journal of Communications and Networks}, vol. 15, no. 4, pp.
  338--351, 2013.

\bibitem{zheng2021aging}
J.~Zheng, J.~Zhang, E.~Björnson, and B.~Ai,
\newblock ``Impact of channel aging on cell-free massive {MIMO} over spatially
  correlated channels,''
\newblock {\em IEEE Trans. Wireless Commun.}, vol. 20, no. 10, pp. 6451--6466,
  2021.

\bibitem{schotten2022delay}
W.~Jiang and H.~D. Schotten,
\newblock ``Deep learning-aided delay-tolerant zero-forcing precoding in
  cell-free massive {MIMO},''
\newblock in {\em 2022 IEEE 96th Vehicular Technology Conference
  (VTC2022-Fall)}, 2022, pp. 1--5.

\bibitem{miretti2023duality}
L.~Miretti, R.~L.~G. Cavalcante, E.~Bj\"ornson, and S.~Sta{\'n}czak,
\newblock ``{UL-DL} duality for cell-free massive {MIMO} with per-{AP} power
  and information constraints,''
\newblock {\em ar{X}iv:2301.06520}, 2023.

\bibitem{yukselbook}
S.~Y{\"u}ksel and T.~Ba{\c{s}}ar,
\newblock {\em Stochastic networked control systems: Stabilization and
  optimization under information constraints},
\newblock Springer Science \& Business Media, 2013.

\bibitem{miretti2022joint}
L.~Miretti, R.~L.~G. Cavalcante, and S.~Sta{\'n}czak,
\newblock ``Joint optimal beamforming and power control in cell-free massive
  {MIMO},''
\newblock {\em Proc. {IEEE} Global Conf. Communications {(GLOBECOM)}}, 2022.

\bibitem{bazco2022decentralized}
A.~Bazco-Nogueras, P.~De~Kerret, D.~Gesbert, and N.~Gresset,
\newblock ``Asymptotically achieving centralized rate on the decentralized
  network {MISO} channel,''
\newblock {\em IEEE Trans. Info. Theory}, vol. 68, no. 1, pp. 248--271, 2022.

\end{thebibliography}

\end{document}